\documentclass[11pt]{article}
\usepackage[margin= 1in]{geometry}
\IfFileExists{lmodern.sty}{\usepackage[T1]{fontenc}\usepackage{lmodern}}{}
\usepackage{amsmath,amssymb,amsthm,mathtools}
\usepackage{microtype}
\usepackage{enumitem}
\usepackage{titlesec}
\usepackage[colorlinks=true,urlcolor=blue,linkcolor=blue]{hyperref}

\titleformat{\section}{\large\bfseries}{\thesection.}{0.5em}{}
\titlespacing*{\section}{0pt}{9pt plus 2pt}{3pt}
\setlist{itemsep=1pt,topsep=2pt,parsep=0pt}

\newtheoremstyle{compact}{5pt}{5pt}{\itshape}{}{\bfseries}{.}{0.5em}{}
\theoremstyle{compact}

\newtheorem{lemma}{Lemma}

\newcommand{\dd}{\mathrm d}
\newcommand{\diag}{\operatorname{diag}}
\newcommand{\tr}{\operatorname{tr}}
\newcommand{\one}{\mathbf 1}
\newcommand{\R}{\mathbb R}
\newcommand{\cL}{\mathcal L}
\newcommand{\cM}{\mathcal M}
\newcommand{\cD}{\mathcal D}
\newcommand{\ip}[2]{\langle #1,#2\rangle}
\newcommand{\T}{^{\top}}

\begin{document}

\title{On the Guo–Fang–Lu Algorithm for Koml\'os Discrepancy}
\date{}
\author{Nikhil Bansal}
\maketitle
\begin{abstract}

We give an exposition of the recent polynomial time algorithm of Guo, Fang, and Lu~\cite{GuoFangLu26} for the
Koml\'os problem. We simplify various arguments, and highlight the key new spectral potential idea, and how the algorithm follows naturally from it.
\end{abstract}
\section{Introduction}
There has been amazing recent progress on the Koml\'os problem. Guo--Fang--Lu \cite{GuoFangLu26exist},
building on the work of Smirnov and Vershynin \cite{SmirnovVershynin26},
showed the existence of $O(1)$ discrepancy coloring. See also \cite{KaringulaLovett26, Bandeira26blog} for simpler expositions. However, it was unclear how to use this approach to find such a coloring efficiently in polynomial time.

More recently, Guo--Fang--Lu \cite{GuoFangLu26} also gave such an algorithm based on the Brownian walk approach in discrepancy. Their work builds on the work of Guillen and Kobzar~\cite{GuillenKobzar26}, which shows $O(1)$ discrepancy when the $\pm 1$ signs are relaxed to be unit complex vectors.
The goal of this note is to make the remarkable new ideas in \cite{GuoFangLu26} more accessible. It was motivated by a reading group on discrepancy at the Simons Institute.

The author used Astra 6.0 to understand the paper \cite{GuoFangLu26}, and to help with several calculations in this note. The exposition is completely by the author, who takes full responsibility for any errors.

\medskip
{\bf The Problem.} In the Koml\'os problem, we are given a matrix $A \in \R^{m\times n}$ with columns of length at most $1$, and the goal is to find a  signing (aka coloring) $x\in \{-1,1\}^n$ with low discrepancy $\|Ax\|_\infty$. For cleaner exposition, 
we will focus on the Beck-Fiala setting, where $A$ has entries in $\{0,1\}$, with at most $k$ non-zero entries in every column, and prove an algorithmic $O(\sqrt{k})$ bound.
We assume throughout that $k\geq k_0$ for some large enough constant $k_0$.
The ideas extend directly to the Koml\'os problem, and we sketch this in Section \ref{sec:komlos}.

\medskip
{\bf High-level Idea.} 
We describe the continuous time process underlying the algorithm, which is cleaner. The polynomial time implementation follows by standard discretization.

The process starts from $x_0={\bf 0}$ at time $t=0$, and evolves the coloring $x_t \in [-1,1]^n$ over time using a suitable Brownian motion.
Once a coordinate reaches $\pm1$, it is frozen and no longer updated. 
As the coloring evolves, call a row {\em dangerous} if its (regularized)  discrepancy gets close to the target $\beta=O(k^{1/2})$, while it still has many uncolored coordinates. The key idea will be to ensure that at any time $t$, there are at most $N_t/4$ dangerous rows, where $N_t$ is the number of alive coordinates. Given this property, one can walk in a suitable subspace, so that the discrepancy of such dangerous rows only decreases.

Previous works also use a similar approach \cite{BansalJiang26, GuillenKobzar26}, but (roughly) the problem is that the number of dangerous rows gets harder to control as $N_t$ gets smaller over time.\footnote{To handle this, they require that $k = \Omega(\text{polylog}(n))$, or they  relax the signs to be unit complex numbers.}
The key new idea here is to get this control by bounding a single spectral quantity: the operator norm $\|W_t\|$ of a Gram matrix $W_t$ formed by the rows  weighted (cleverly) by how dangerous they are currently. Given this, the algorithm is designed cleverly to ensure that $\|W_t\|\leq 1$ holds at all times.

\section{The Walk}\label{sec:walk}
We now describe the continuous process. We begin with some notation and definitions, and then describe the spectral potential and why it is useful.

\smallskip

\textbf{Notation.}
 We index rows by $i$ and columns by $j$. Let $a_i$ denote row $i$ of $A$.
Introducing rows $-a_i$ if necessary,\footnote{The column sparsity is now $2k$ and $A$ has $\{-1,0,1\}$ entries. We still call this matrix $A$.} 
it suffices to upper bound the (one-sided) discrepancy $\ip{a_i}{x}$ over all rows.

For a fractional coloring $x_t \in [-1,1]^n$ at time $t$, let $V_t=\{j:|x_{t}(j)|<1\}$ denote the set of alive coordinates and let $N_t=|V_t|$. 
For a row $i$, let
 $S_{i,t}=\{j\in V_t:a_{ij}\neq0\}$ be its support on alive coordinates and $s_{i,t}=|S_{i,t}|$ denote its {\em size}. Finally, let $E_{i,t}=\sum_{j\in S_{i,t}}(1-x_t(j)^2)$ denote its remaining {\em energy}, and $d_{t}(i)=\ip{a_i}{x_t}$ denote its discrepancy. Note that $E_{i,t}\leq s_{i,t}$.

\medskip

\textbf{Row classes.} At any time $t$, we classify rows based on their size $s_{i,t}$ as:

(i) {\em large} ($i\in\cL_t$) if $s_{i,t} >16k$, 

(ii)  
{\em medium}  ($i\in\cM_t$) if $s_{i,t} \in [20 k^{1/2},16k]$,

(iii) {\em small} if  $s_{i,t} <20 k^{1/2}$.

Note that $s_{i,t}$ can only decrease over time.
It is easy to ensure that any row incurs no discrepancy while it is large. Also, once a row is small, we can  ignore it as it can incur at most $O(k^{1/2})$ additional discrepancy. So the action will be in handling medium rows.

\medskip

{\bf Slack and Dangerous Rows.}
Let $\beta=50 k^{1/2}$.
The algorithm will ensure the following for each medium row $i$ and each time $t$:
\begin{equation}\label{budget}
 d_t(i)+\frac{E_{i,t}}{\sqrt{s_{i,t}}}+5\sqrt{s_{i,t}}\le\beta. 
\end{equation}
As $E_{i,t},s_{i,t}\geq 0$, this will give the desired $O(k^{1/2})$ discrepancy.
To this end, let us define the \emph{slack} \[\sigma_{i,t} :=\frac{\beta-d_t(i)}{\sqrt{s_{i,t}}}-\frac{E_{i,t}}{s_{i,t}}-5.\]
So requirement \eqref{budget} is equivalent to $\sigma_{i,t}\geq 0$.
Note that $\sigma_{i,t}$ can change discontinuously when some variable $x_t(j)$ reaches $\pm 1$ (and $s_{i,t}$ decreases by $1$). We call this a {\em freezing} event. Also, when row $i$ first becomes medium at time $t$ ($t=0$ if it is medium in $A$), as $d_{i,t}=0$ and $s_{i,t} \leq 16k$, and $E_{i,t}\leq s_{i,t}$, we have that  $\sigma_{i,t} \geq 50/4-6\geq 6$, and hence far enough from $0$.

 We call a medium row $i$ \emph{dangerous} at time $t$ if $\sigma_{i,t} \leq 1/20$. Let 
 $\cD_t$ denote the set of dangerous rows at time $t$.
The potential below is the key new ingredient to bound $|\cD_t|$ in terms of $N_t$. 

\medskip 

\textbf{A Spectral Potential.} 
Define the following matrix $W_t \in \R^{V_t \times V_t}$ at time $t$.
\begin{equation}\label{matrix}
 W_t =\frac{\one_{V_t}\one_{V_t}\T}{100n}+\sum_{i\in\cL_t}\frac{\diag(\one_{S_{i,t}})}{100k}+\sum_{i\in\cM_t}W_{i,t},
\end{equation}
where $W_{i,t}=10  e^{-\sigma_{i,t}} \,s_{i,t}^{-2}\,\one_{S_{i,t}}\one_{S_{i,t}}\T$.

The third term in \eqref{matrix} is the important one.
Note that $W_{i,t}$ is a rank one matrix corresponding to medium row $i$, and the term $\exp(-\sigma_{i,t})$ gets larger  as the slack gets smaller.

Also, each entry in $W_{i,t}$ is non-negative.
The first term in \eqref{matrix} is negligible and is there to make each entry of $W_t$ (strictly) positive. By the Perron--Frobenius theorem, the top eigenvalue  $\lambda_t = \lambda_{\max}(W_t ) = \|W_t\|$ is simple, and depends smoothly on $x$. 
\allowdisplaybreaks
It is also easily verified\footnote{For any medium row one has $W_{i,0} \preceq \diag(\one_{S_i})/(100k)$, so
 $W_0\preceq \one\one\T/(100n)+\sum_{i\in\cL\cup\cM}\diag(\one_{S_{i,0}})/(100k)
 \preceq (1/100+ 2k/(100k))I=3/100\,I$,
where the second step uses that each coordinate lies in $\leq 2k$ rows.}
that $\lambda_0\leq 3/100$ initially, at $t=0$. 

\medskip 
{\bf Why is $W_t$ useful.}
The algorithm will maintain that $\lambda_t \le 1$ for all times $t$.

This spectral control has several nice properties. A general difficulty in discrepancy algorithms is that one lacks much control on which variables reach $\pm 1$ over time. However, $\lambda_t\leq 1$ ensures that
irrespective of which $N_t$ columns are alive at time $t$, at most $N_t/8$ rows can be dangerous. 
Indeed, as  $\ip{\one_{V_t}}{W_{i,t}\one_{V_t}}=10e^{-\sigma_{i,t}}$ for each medium row $i$, one has
\begin{equation}\label{protected}
 \sum_{i\in\cM_t}10e^{-\sigma_{i,t}}\le\ip{\one_{V_t}}{W_t \one_{V_t}}\le \lambda_t \|\one_{V_t}\|^2=   \lambda_t N_t\le N_t.
\end{equation}
As $10e^{-\sigma_{i,t}} \geq 10e^{-1/20} \geq 8$ for each dangerous row $i \in \cD_t$, this implies $|\cD_t| \leq N_t/8$. 

Previous approaches bounded $|\cD_t|$ indirectly by requiring that {\em every} column $j$ have $\ll k$ of its entries  lie in dangerous rows.
However, the condition $\lambda_t\leq 1$ is more relaxed and does not require such strong per-column control.
Finally, $\lambda_t$ is also well-behaved analytically.
\subsection{The Walk} 
This motivates the following natural process.

a. Start with $x_0={\bf 0}$ at $t=0$. 

b. While $N_t \geq 20k^{1/2}$, set $\dd x_t=h_t \,\dd B_t$, where $B_t$ is a standard 1-d Brownian motion and $h_t\in\R^{V_t}$ is any unit vector satisfying:
\begin{enumerate}
\item $h_t \perp x_{V_t}$; \hfill (progress)
\item $h_t\perp a_{i,{V_t}}$ for every $i\in\cL_t$; \hfill (protect large rows)
\item $h_t\perp\nabla\sigma_{i,t}$ for every $i\in\cD_t$; \hfill (protect dangerous rows)
\item if $\lambda_t \geq 1$, then $\partial_{h_t}\lambda_t=0$ and $\partial_{h_t}^2\lambda\le0$. \hfill ($\|W_t\|$ does not grow)
\end{enumerate}
\,\,\,\,\, When a coordinate reaches $\pm1$, it leaves $V_t$, rows are reclassified, and $s_{i,t},W_t$ are updated. 

c. When no large or medium rows remain, set the alive variables arbitrarily to $\pm 1$.

\section{Analysis}\label{sec:invariants}
We now show that $h_t$ always exists, and that the walk above produces an $O(k^{1/2})$ discrepancy coloring.
This will follow from the next two Lemmas.
\begin{lemma}\label{lem:invariants}
For any time $T$, if some $h_t$ satisfying Conditions 1--4 exists at every $t<T$, then $\sigma_{i,t}\ge 0$ and $\lambda_t\le 1$,
for every medium row $i$ and time $t \leq T$, 
\end{lemma}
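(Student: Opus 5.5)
The plan is to prove both invariants together by following the process through its two kinds of behaviour: continuous evolution between freezing events, and the discrete updates at a freezing event, when a coordinate leaves $V_t$ and rows may be reclassified. For each quantity I will show that it cannot cross its threshold during the continuous phase, and that no discrete update can push it past the threshold.

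\emph{Slack.} When a row first becomes medium it has $\sigma_{i,t}\ge 6$; this was shown for rows medium at $t=0$. A row that was large has $d_t(i)=0$ by Condition~2, so the same computation applies to it.

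\emph{Slack, continuous phase.} While $\sigma_{i,t}>1/20$ there is nothing to prove. Once $i\in\cD_t$, apply It\^o's formula to
\[\sigma_i(x)=\frac{\beta-\ip{a_i}{x}}{\sqrt{s}}-\frac{1}{s}\sum_{j\in S_i}(1-x_j^2)-5,\]
with $s$ fixed. The martingale term $\ip{\nabla\sigma_{i,t}}{h_t}\,\dd B_t$ vanishes by Condition~3. The Hessian of $\sigma_i$ is $\frac{2}{s}\diag(\one_{S_i})$, so the drift is $\frac{1}{s}\|h_{t,S_i}\|^2\,\dd t\ge 0$. Hence $\sigma_{i,t}$ is nondecreasing while $i$ is dangerous and cannot cross $0$.

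\emph{Slack, freezing.} If a coordinate $j\in S_{i,t}$ freezes, then $d_t(i)$ and $E_{i,t}$ are unchanged, since the term $1-x_j^2$ is now $0$, and $s$ drops to $s-1$. Write $u=(\beta-d)/\sqrt s$ and $e=E/s\le 1$. Then $u\ge 5+e$ because $\sigma\ge 0$. The slack changes by
\[\sigma'-\sigma = u\Bigl(\sqrt{\tfrac{s}{s-1}}-1\Bigr)-\frac{e}{s-1}\gtrsim\frac{5+e}{2s}-\frac{e}{s-1}>0 .\]
So the slack never decreases at freezing events.

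\emph{Spectral potential, continuous phase.} The matrix $W_t$, and hence $\lambda_t$, is a smooth function of $x$ between freezing events; membership in $\cD_t$ does not enter $W_t$. Whenever $\lambda_t\ge1$, It\^o gives $\dd\lambda_t=\partial_{h_t}\lambda_t\,\dd B_t+\tfrac12\partial^2_{h_t}\lambda_t\,\dd t$. By Condition~4 the first term is zero and the second is $\le 0$. A first-crossing-time argument then shows that $\lambda$ cannot exceed $1$: at the first time $\lambda>1$, on a small interval it would have zero noise and nonpositive drift, which is a contradiction.

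\emph{Spectral potential, freezing.} Here I need $W$ not to increase in the Loewner order. First, deleting coordinate $j$ replaces each term by its principal submatrix on $V_t\setminus\{j\}$, which does not increase $\lambda$ by interlacing. The constant term and the large-row diagonal terms only shrink. A large row turning medium swaps $\diag(\one_{S})/(100k)$ for
\[W_{i}\preceq 10e^{-\sigma}s^{-1}\diag(\one_S)\preceq \frac{10e^{-6}}{20k^{1/2}}\diag(\one_S).\]
This needs care when $s$ is near $16k$: the comparison uses $s\ge 16k$ together with $10e^{-6}/16<1/100$, which holds. A medium row turning small just drops a PSD term.

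\emph{Main obstacle.} The remaining case is a medium row staying medium as $s\to s-1$. Its rank-one term has entries that scale by $\bigl(\tfrac{s}{s-1}\bigr)^2 e^{-(\sigma'-\sigma)}$, so I need
\[\sigma'-\sigma\ \ge\ 2\ln\tfrac{s}{s-1}\approx \tfrac{2}{s}.\]
The gain from the slack step is about $\frac{5+e}{2s}-\frac{e}{s-1}$, which is worst at $e=1$ and equals about $\frac{2}{s}$. So the constant $5$ appears to be chosen to make this work, and the margin is tight. I will expand both sides to second order in $1/s$ and use $s\ge 20k^{1/2}$ with $k\ge k_0$ to absorb the error terms. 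If it is too tight, I will keep the exact expression $u\bigl(\sqrt{s/(s-1)}-1\bigr)$ with $u\ge 5+e$, rather than its first-order approximation, since it is slightly larger.
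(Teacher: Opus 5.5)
Your proof is correct and has the same structure as the paper's. The discrete events are handled the same way: medium$\to$small drops a PSD term; large$\to$medium uses $d=0$, $\sigma\ge 6$ and $s=16k$ at the transition; a freezing event uses Cauchy interlacing for the restriction to $V\setminus\{j\}$. The continuous motion is handled by It\^o's formula with Conditions~3 and~4. On large$\to$medium, your first displayed bound with $20k^{1/2}$ in the denominator would not suffice; the comparison you then state, with $s=16k$ and $10e^{-6}/16<1/100$, is the one that works.

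The real difference is the freezing step for a row that stays medium. You compare the discrete jump $\sigma'-\sigma=u(\sqrt{s/(s-1)}-1)-e/(s-1)$ directly with $2\ln\frac{s}{s-1}$. You are right that the first-order bound $\sqrt{s/(s-1)}-1\ge\frac{1}{2s}$ is not enough. In the worst case $\sigma=0$, $e=1$ it gives $\frac{2}{s}-\frac{1}{s^2}+O(s^{-3})$, while $\frac{2}{s}+\frac{1}{s^2}+O(s^{-3})$ is required.

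Your fallback closes the gap, so the fallback is necessary rather than optional. The gain is increasing in $u$. With $u=\sigma+e+5$ it is decreasing in $e$, because $\sqrt{s/(s-1)}-1<\frac{1}{s-1}$. So it suffices to take $u=6$, $e=1$, where
$6\bigl(\sqrt{s/(s-1)}-1\bigr)-\frac{1}{s-1}-2\ln\frac{s}{s-1}=\frac{1}{4s^2}+O(s^{-3})$.
Expanding in $1/s$ shows this quantity is positive for every $s\ge 2$, so in particular for medium rows.

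The paper avoids the expansion by treating $s$ as a continuous parameter with $d$ and $E$ held fixed. It computes $\frac{d\sigma}{ds}=-\frac{\sigma+5-E/s}{2s}$ and $\frac{d}{ds}\log(10e^{-\sigma}s^{-2})=\frac{\sigma+1-E/s}{2s}\ge 0$, then integrates over $[s-1,s]$. This route is exact and needs no lower bound on $s$. It also shows why the constant $5$ works: after subtracting $4$ from the $s^{-2}$ factor, the remaining $1$ exactly offsets $E/s\le 1$. Its one hidden requirement is $E/s'\le 1$ on all of $[s-1,s]$. That holds because the frozen coordinate contributes zero energy, so $E\le s-1$.

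Your discrete computation reaches the same conclusion with an explicit margin of order $s^{-2}$, at the cost of a second-order expansion and a small monotonicity reduction.
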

\begin{lemma}
\label{lem:counting}
At any time $t$ at which the algorithm is still running (so that $N_t\ge 20k^{1/2}$), if $\lambda_t\le 1$, then there is some unit vector $h_t\in\R^{V_t}$ satisfying Conditions 1--4. 
\end{lemma}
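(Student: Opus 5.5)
We will prove Lemma~\ref{lem:counting} by dimension counting. We exhibit a subspace $U\subseteq\R^{V_t}$ of dimension at least $1$ (in fact, a constant fraction of $N_t$) on which Conditions 1--3 hold as linear constraints. Within $U$ we then find a direction satisfying Condition~4.

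\textbf{Counting the linear constraints.} Condition 1 imposes one constraint. For Condition 2, each alive coordinate lies in at most $2k$ rows, and each large row has $s_{i,t}>16k$ alive coordinates. Double counting gives $|\cL_t|\cdot 16k<\sum_i s_{i,t}\le 2kN_t$, so $|\cL_t|\le N_t/8$. For Condition 3, we use the spectral bound \eqref{protected}: $\lambda_t\le1$ gives $|\cD_t|\le N_t/8$. The gradient $\nabla\sigma_{i,t}$ restricted to $V_t$ is a single vector, so each dangerous row contributes one constraint. Let $U$ be the orthogonal complement of $x_{V_t}$, $\{a_{i,V_t}\}_{i\in\cL_t}$ and $\{\nabla\sigma_{i,t}\}_{i\in\cD_t}$. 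Then $\dim U\ge N_t-1-N_t/4\ge N_t/2$, using $N_t\ge 20k^{1/2}$ large. If $\lambda_t<1$, any unit $h_t\in U$ works, since Condition~4 is vacuous.

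\textbf{Handling Condition 4 when $\lambda_t=1$.} By Perron--Frobenius, $\lambda_t$ is simple with positive unit eigenvector $v$. First-order perturbation theory gives $\partial_h\lambda=\ip{v}{(\partial_h W)v}$, and $\partial_hW$ is linear in $h$. So $\partial_h\lambda=\ip{g}{h}$ for a single vector $g$, and imposing $h\perp g$ costs one more dimension. Let $U'=U\cap g^\perp$, with $\dim U'\ge N_t/2-1$. On $U'$ the second derivative $\partial_h^2\lambda$ is a quadratic form $h\T Q h$. We need a unit $h\in U'$ with $h\T Qh\le0$. It suffices to show $\tr(P_{U'}Q P_{U'})\le 0$, or more robustly that $Q$ restricted to $U'$ has at most $\dim U'-1$ positive eigenvalues; then some $h$ in that subspace has $h\T Qh\le 0$.

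\textbf{Computing $Q$ and bounding its positive part.} By the perturbation formula,
\[
\partial_h^2\lambda=\ip{v}{(\partial_h^2W)v}+2\sum_{\mu\ne\lambda}\frac{\ip{u_\mu}{(\partial_hW)v}^2}{\lambda-\mu}.
\]
The first term comes from differentiating $e^{-\sigma_{i,t}}$ twice. Since $\partial_h^2\sigma_i$ involves $-\partial_h^2E_{i,t}/s_{i,t}=2\|h_{S_i}\|^2/s_{i,t}$, this gives contributions of the form $10e^{-\sigma_i}s_i^{-2}\ip{\one_{S_i}}{v}^2\,\bigl((\partial_h\sigma_i)^2-\partial_h^2\sigma_i\bigr)$. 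The spectral-gap term is a sum of squares of linear forms in $h$, and it is the main obstacle: it is positive and not obviously small. We plan to bound it by $\|\partial_hW\|^2$-type quantities. The idea is that $\partial_h W=\sum_i W_{i,t}\cdot(-\partial_h\sigma_i)$, where $\partial_h\sigma_i=-\ip{a_i}{h}/\sqrt{s_i}+2\ip{x_{S_i}}{h_{S_i}}/s_i$. Averaging over a random unit $h\in U'$ gives $\mathbb E\,h\T Qh=\tr(Q|_{U'})/\dim U'$, and we will show this average is $\le0$. The negative contribution $-\partial_h^2E/s$ averages to roughly $-\sum_i 20e^{-\sigma_i}s_i^{-3}\ip{\one_{S_i}}{v}^2\cdot s_i/N_t$, which is deterministic and large. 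The positive contributions (the squared first derivatives and the gap term) involve $\mathbb E\ip{a_i}{h}^2/s_i\approx 1/N_t$ as well. The difficulty is that the gap term involves the gap $\lambda-\mu$, which has no a priori lower bound.

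We will try to circumvent the gap issue by using the variational characterization instead: $\lambda(W+\Delta)\le\lambda$ to second order iff $\max_{u}\ip{u}{(W+\Delta)u}$ does not grow. Alternatively, we can further restrict $h$ to also be orthogonal to the vectors $(\partial_hW)v$ directions, so that the gap term vanishes. The map $h\mapsto(\partial_hW)v=\sum_i c_i(h)\,W_{i,t}v$ lies in the span of the $\one_{S_i}$. Its coefficients $c_i(h)$ are linear forms, one per medium row. Rows with large $e^{-\sigma_i}$ are few (at most $N_t/8$ with weight $\ge 8$, as shown via \eqref{protected}). We expect to need a weighted version that uses $\sum_i10e^{-\sigma_i}\le N_t$ to kill the dominant part of the gap term with $O(N_t/8)$ extra constraints. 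After that, the remaining positive terms will be dominated by the $2\|h_{S_i}\|^2/s_i$ negative term using the constants $10$ and $5$ in the slack. This last balancing of constants is where I expect the real work to lie.
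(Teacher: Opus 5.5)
Your reduction is sound as far as it goes. The count for Conditions 1--3 works: your bound $|\cL_t|\le N_t/8$ is weaker than the paper's $N_t/16$, which uses that $a_i$ and $-a_i$ give the same constraint, but $1+N_t/4$ constraints still suffice. You also correctly note that $\partial_h\lambda=0$ is one linear constraint, and that the resolvent term is the obstacle.

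But the proposal stops exactly where the proof of Condition~4 has to happen, and the remedies you sketch would not work. Killing the few rows with $10e^{-\sigma_i}\ge 8$ does nothing about the factor $1/(\lambda-\mu)$. A near-degenerate top of the spectrum of $W$ can be produced entirely by non-dangerous rows, e.g.\ two disjoint blocks of overlapping medium rows, each giving an eigenvalue close to $1$. Forcing $(\partial_hW)v$ to vanish, or to be parallel to $v$, costs up to $N_t$ linear constraints. Killing every coefficient $z_i$ costs $|\cM_t|$ constraints, which can far exceed $N_t$.

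The missing idea is to block only the top of the spectrum. Let $F$ be the span of the eigenvectors with $\lambda_\ell\ge\lambda/10$, and impose $\Pi_F(\partial_hW)v=0$; this contains $\partial_h\lambda=0$ since $v\in F$. Write $W_i=c_ic_i\T$ and $(\partial_hW)v=Cz$ with $C=[c_i]_{i\in\cM}$ and $z_i=-\ip{\nabla\sigma_i}{h}\ip{c_i}{v}$. Then $CC\T=\sum_{i\in\cM}W_i\preceq W$, so what is left of the gap term is at most $2\max_{\lambda_\ell<\lambda/10}\frac{\lambda_\ell}{\lambda-\lambda_\ell}\,\|z\|^2\le\frac29\|z\|^2$. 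Moreover $\dim F\le N/100$ by a trace argument. Indeed, $\sum_{i\in\cM}\ip{\one}{W_i\one}\le\lambda N$ and $s_i\ge 20k^{1/2}$ give $\tr\sum_{i\in\cM}W_i\le\lambda N/(20k^{1/2})$. The rest of $W$ has norm at most $3/100$, and Weyl's inequality then bounds the number of eigenvalues above $\lambda/10$.

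The second gap is in balancing the first-derivative terms against the It\^o term. Let $D=\sum_{i\in\cM}2(v\T W_iv)\diag(\one_{S_i})/s_i$. Then $\|z\|^2=\sum_{i\in\cM}(v\T W_iv)\ip{\nabla\sigma_i}{h}^2$ cannot be dominated pointwise by $h\T Dh$: for $h$ along $\nabla\sigma_i$ the ratio is about $s_i/2$. Tuning the constants $10$ and $5$ is beside the point; they serve $|\cD_t|\le N_t/8$ and monotonicity at freezing events. Your averaging heuristic is the right intuition: $\|\nabla\sigma_i\|^2\approx 1$ against the It\^o factor $2$ in $\partial_h^2E_i=-2\|h_{S_i}\|^2$. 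But an average over your $U'$, which is cut out by essentially arbitrary constraints, is not controlled by the full trace.

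The robust version is eigenvalue counting. Set $\Gamma=\sum_{i\in\cM}(v\T W_iv)\,\nabla\sigma_i\nabla\sigma_i\T$. The bound $(\nabla\sigma_i)_j^2\le c_k/s_i$ with $c_k=1+O(k^{-1/4})$ gives $\Gamma_{jj}\le\frac{c_k}{2}D_{jj}$. So $D^{-1/2}\Gamma D^{-1/2}$, taken on the coordinates with $D_{jj}>0$ (which carry all of $\Gamma$), has trace at most $c_kN/2$. Hence it has at most $\frac{11}{18}c_kN\le\frac58N$ eigenvalues above $\frac9{11}$, and whenever $D^{1/2}h$ is orthogonal to those eigenvectors, $\|z\|^2\le\frac9{11}h\T Dh$.

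Intersecting the two subspaces gives a fixed subspace $H$ with $\dim H\ge N-\frac{N}{100}-\frac{5N}{8}>\frac N3$. On $H$ we have $\partial_h\lambda=0$ and $\partial_h^2\lambda\le\frac{11}9\|z\|^2-h\T Dh\le0$. It is precisely because $H$ is a subspace of this dimension that it survives intersection with the fewer than $N/3$ constraints from Conditions 1--3.
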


Indeed, together these imply that $h_t$ always exists. This gives the claimed discrepancy bound as follows. 
By Condition 1 for $h_t$ $\dd\|x\|^2=2\ip{x}{h}\,\dd B_t+\|h\|^2\dd t=\dd t$, so the algorithm terminates by time $n$.
By Condition 2 for $h_t$, a row incurs no discrepancy while it is large, and Lemma \ref{lem:invariants} ensures that $\sigma_{i,t}\geq 0$ while it is medium.

\begin{proof} (Lemma \ref{lem:invariants})
We first consider how the various (discontinuous) structural changes due to some coordinate freezing can affect
$\sigma_{i,t}$ and $\lambda_t$.

 (i) Medium row $i$ becomes small: This only reduces $\lambda_t$ as the psd term $W_{i,t}$ disappears from $W_t$.

 (ii) Large row $i$ changes to medium: 
When this happens $\sigma_{i,t}\geq 50/4 -6 \geq 6$. Moreover, $\lambda_t$ can only decrease as $s_{i,t}=16k$ and thus 
\[ W_{i,t} = 10e^{-\sigma_{i,t}} s_{i,t}^{-2} \one_{S_{i,t}}\one_{S_{i,t}}\T \preceq 10 e^{-6}s_{i,t}^{-1} \diag(\one_{S_{i,t}})  \preceq \frac{1}{100k} \diag(\one_{S_{i,t}}).\] 

(iii) Variable $j$ freezes: 
Here $d_{i,t}$ and $E_{i,t}$ are unaffected, and $s_{i,t}$ decreases by $1$ if row $i$ contains $j$.
We show that $\sigma_{i,t}$ can only increase
 and $\lambda_t$ can only decrease.
 Indeed (dropping $i,t$ to avoid clutter), and using that $\sigma>0$ and $E/s\leq 1$, we have 
\[
 \frac{\dd \sigma}{ds}   =-\frac{\beta-d}{2s^{3/2}}+\frac{E}{s^2} =-\frac{\sigma+5-E/s}{2s} < 0,\]
As $ds<0$ for us, this gives that $\sigma$ can only increase. Similarly, 
 \[\frac{d}{ds}\log (10e^{-\sigma}s^{-2}) =-\frac{\dd \sigma}{ds}-\frac2s=\frac{\sigma+1-E/s}{2s}\ \ge\ 0.
\]
So $W_t$ can only decrease in the psd ordering. Finally,
since $V$ changes to $V\setminus \{j\}$, by Cauchy interlacing, restricting $W_t$ further to the principal submatrix $V'\times V'$ can only decrease $\lambda_t$ further.

\medskip

{\bf Continuous motion.} 
We now consider the effect of the motion $dx_t = h_t dB_t$. Again we drop $t$ to avoid clutter.
For a row $i$ we have the directional derivatives:

(i) $\partial_hd_i=\ip{a_{i,V}}{h}$ and  $\partial_h^2d_i=0$,

(ii) $\partial_hE_i=-2\ip{x_{S_i}}{h_{S_i}}$ and $\partial_h^2E_i=-2\|h_{S_i}\|^2$.

Note that $s_{i,t}$ stays fixed.
So by It\^o's formula, 
\begin{align*}
 \dd\sigma_i=\ip{\nabla\sigma_i}{h}\,\dd B_t+ \|h_{S_i}\|^2/s_i\,\dd t, \quad \text{and} \quad 
 \dd\lambda=\partial_h\lambda\,\dd B_t+\tfrac12\partial_h^2\lambda\,\dd t.
\end{align*}
When $\sigma_i \leq 1/20$, Condition 3 kills  the $\ip{\nabla\sigma_i}{h}\dd B_t$ term, so $\sigma_i$ can only increase.
Similarly, when $\lambda\geq 1$, Condition 4  ensures (deterministically) that $\lambda$ cannot increase.  
\end{proof}

\subsection{Proof of Lemma \ref{lem:counting}}
We now show the existence of $h_t$ satisfying Conditions 1--4, using a dimension counting argument.
We consider a fixed $t$, and drop the subscript $t$ throughout to avoid clutter.

Conditions 1--3 are easy and they impose at most $1+N/16+N/8 < N/3$ (as $N$ is large enough) linear constraints. Indeed,
Condition 1 is a single constraint. Condition 2 for large rows imposes at most $N/16$ constraints ($a_i$ and $-a_i$ give the same constraint). For Condition 3, by \eqref{protected} and as $\lambda\leq 1$ there are at most $N/8$ dangerous rows.

Thus if $\lambda <1$, Condition 4 is vacuous, and we are already done. So, henceforth we assume that $\lambda \geq 1$. 
Even though Condition 4 is not linear in $h$, it is handled by the following lemma.

\begin{lemma}\label{lem:spectral}
Let $k$ be sufficiently large constant. If $\lambda \geq 1$, there is a subspace $H\subseteq\R^V$ with $\dim H>N/3$ such that every $h\in H$ satisfies $\partial_h\lambda=0$ and $\partial_h^2\lambda\le0$.

In particular, as Conditions 1--3 impose less than $N/3$ linear constraints on $H$, this implies Lemma~\ref{lem:counting}. 
\end{lemma}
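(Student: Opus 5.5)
The plan is to write $\lambda$ via first- and second-order perturbation theory around the Perron eigenvector, and then find a large subspace on which the resulting quadratic form is nonpositive. Since we are in the case $\lambda\ge 1$ and the lemma is only invoked when $\lambda\le 1$, we have $\lambda=1$. Let $u>0$ be the unit Perron eigenvector of $W$; it is unique because $\lambda$ is simple.

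Only the medium terms of \eqref{matrix} depend on $x$ at fixed $V$, and only through $\sigma_i$. Hence $\partial_h W=-\sum_{i\in\cM}(\partial_h\sigma_i)W_i$ and $\partial_h^2W=\sum_{i\in\cM}\bigl((\partial_h\sigma_i)^2-\partial_h^2\sigma_i\bigr)W_i$. Set $w_i:=\ip{u}{W_iu}=10e^{-\sigma_i}s_i^{-2}\ip{\one_{S_i}}{u}^2$. Then
\[
\partial_h\lambda=-\sum_{i\in\cM}w_i\,\partial_h\sigma_i,\qquad
\partial_h^2\lambda=\sum_{i\in\cM}w_i\Bigl((\partial_h\sigma_i)^2-\tfrac{2\|h_{S_i}\|^2}{s_i}\Bigr)+2\sum_{v\perp u}\frac{\ip{v}{(\partial_hW)u}^2}{\lambda-\mu_v},
\]
where $v$ ranges over the other eigenvectors of $W$, with eigenvalues $\mu_v$. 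Here I used the computation $\partial_h^2\sigma_i=2\|h_{S_i}\|^2/s_i$ from the proof of Lemma~\ref{lem:invariants}. The term $-2w_i\|h_{S_i}\|^2/s_i$ is the only favorable one, and it comes from the energy part of the slack. The proof has to make the other two terms vanish, or be dominated, on a subspace of dimension $>N/3$.

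I would kill both the first derivative and the second-order perturbation term by linear constraints on the rows that matter. Let $R=\{i\in\cM:\sigma_i\le C\}$ for a constant $C$. By \eqref{protected} with $\lambda=1$ we get $|R|\le e^{C}N/10$, which stays below $N/2$ for $C\approx 1.6$. On the subspace $\{h:\partial_h\sigma_i=0\ \forall i\in R\}$ the rows in $R$ contribute nothing to $\partial_h W$. I then add the single constraint $\partial_h\lambda=0$. Whatever remains of $\partial_hW$ comes from rows with $\sigma_i>C$, whose weights carry an extra factor $e^{-C}$.

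For these rows the bad terms are $(\partial_h\sigma_i)^2\lesssim \ip{a_{i}}{h}^2/s_i+\|h_{S_i}\|^2/s_i$ and the corresponding resolvent contribution. My intention is to bound both, summed over rows, by a quadratic form $h\T Mh$ whose trace is small relative to $N$, for instance $O(e^{-C})\cdot\tr(W)$ up to factors controlled using $s_i\ge 20k^{1/2}$ and column sparsity $2k$. Then, by eigenvalue counting, $M$ minus the favorable form $\sum_i 2w_i\diag(\one_{S_i})/s_i$ is negative semidefinite on a subspace of dimension at least $N-(\text{number of large eigenvalues of }M)$. I would then intersect this subspace with the $|R|+1$ linear constraints and check that the total codimension is below $2N/3$. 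This is where the choices of $C$ and $k_0$ would be tuned.

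The main obstacle is the resolvent term $\sum_v\ip{v}{(\partial_hW)u}^2/(\lambda-\mu_v)$, because no spectral gap is available. My first attempt is to avoid it entirely by demanding $(\partial_hW)u\in\mathrm{span}(u)$; combined with $\partial_h\lambda=0$ this forces $(\partial_hW)u=0$, so the term vanishes. This requires $\sum_{i\notin R}(\partial_h\sigma_i)\,10e^{-\sigma_i}s_i^{-2}\ip{\one_{S_i}}{u}\one_{S_i}=0$. That is too many constraints in general, so instead I would impose it only for the rows with the largest weights $e^{-\sigma_i}\ip{\one_{S_i}}{u}/s_i^2$. For the remainder, I would bound the resolvent term crudely using the entrywise positivity of $W$ and the $\one\one\T/(100n)$ regularization. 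I expect this accounting to be the delicate part.
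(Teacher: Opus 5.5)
Your perturbation expansion is correct, but the step you flag as delicate is a genuine gap, and the tools you propose cannot close it. Entrywise positivity and the $\one\one\T/(100n)$ term only make $\lambda$ simple. They give no usable lower bound on the gap to the next eigenvalue: for example, a configuration made of two identical disjoint halves has a gap of at most $N/(100n)$. So any residual component of $(\partial_hW)u$ along a near-top eigenvector is amplified without bound. Neither killing $\partial_h\sigma_i$ on the heaviest rows nor the factor $e^{-C}$ on the remaining weights controls such components.

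The missing idea is to constrain $(\partial_hW)u$ spectrally, only on the top part of the spectrum, and then \emph{absorb} the rest of the resolvent term instead of killing it. Write $W_i=c_ic_i\T$ with $c_i=(10e^{-\sigma_i})^{1/2}s_i^{-1}\one_{S_i}$, and let $C$ have columns $c_i$, $i\in\cM$. Then $(\partial_hW)u=Cz$ with $z_i=-\ip{\nabla\sigma_i}{h}\ip{c_i}{u}$, and $\|z\|^2=\sum_i w_i(\partial_h\sigma_i)^2$. Let $F$ be the span of the eigenvectors of $W$ with eigenvalue at least $\lambda/10$, and require $\Pi_F Cz=0$. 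These are $\dim F$ linear constraints on $h$, and since $u\in F$ they include $\partial_h\lambda=\ip{u}{Cz}=0$. There are few of them. From $\sum_{i\in\cM}10e^{-\sigma_i}\le\lambda N$ and $s_i\ge20\sqrt k$ you get $\tr(CC\T)=\sum_{i\in\cM}10e^{-\sigma_i}/s_i\le\lambda N/(20\sqrt k)$. The large-row and regularization part of $W$ has norm at most $3/100$, so Weyl's inequality (with $\lambda\ge1$) gives $\dim F\le N/100$. On the remaining spectrum, $CC\T\preceq W$ bounds the resolvent term by $2\max_{\mu_v<\lambda/10}\frac{\mu_v}{\lambda-\mu_v}\|z\|^2\le\frac29\|z\|^2$.

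Your treatment of the first bad term also needs a different normalization, and here the margin is tight. The factor $e^{-C}$ buys nothing, because the favorable term $2w_i\|h_{S_i}\|^2/s_i$ carries the same weight $w_i$ as $w_i(\partial_h\sigma_i)^2$. So $\|z\|^2=h\T Uh$, with $U=\sum_{i\in\cM}w_i\nabla\sigma_i\nabla\sigma_i\T$, must be compared with $h\T Dh$, where $D=\sum_{i\in\cM}2w_i\diag(\one_{S_i})/s_i$, in a relative sense and not via an absolute trace. Per coordinate, $(\nabla\sigma_i)_j^2=(a_{ij}/\sqrt{s_i}-2x_j/s_i)^2\le(1+O(k^{-1/4}))/s_i$, against $2/s_i$ in $D$. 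Hence $U_{jj}\le\frac12(1+O(k^{-1/4}))D_{jj}$, and $\tr(D^{-1/2}UD^{-1/2})\le\frac12(1+O(k^{-1/4}))N$; coordinates with $D_{jj}=0$ have $U_{jj}=0$. This factor $\frac12$, coming from the energy term, is the crux. Blocking the eigenvalues of $D^{-1/2}UD^{-1/2}$ above $9/11$ costs fewer than $5N/8$ constraints and leaves $\|z\|^2\le\frac9{11}h\T Dh$. Hence $\partial_h^2\lambda\le\frac{11}9\|z\|^2-h\T Dh\le0$ on a subspace of dimension at least $N-N/100-5N/8>N/3$. That margin is thin. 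Your preliminary constraints on $R$ (up to about $N/2$ of them, with no guaranteed reduction of the relative trace) cannot be afforded in general, and they are unnecessary, since the relative trace bound treats all medium rows at once. Likewise, crude bounds such as $(\partial_h\sigma_i)^2\lesssim\ip{a_i}{h}^2/s_i+\|h_{S_i}\|^2/s_i$ lose constants this argument cannot spare.
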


\begin{proof}
Let $\lambda=\lambda_1>\lambda_2\ge\cdots\ge\lambda_N\ge0$ be the eigenvalues of $W$, with orthonormal eigenvectors $v=v_1,v_2,\ldots,v_N$.
We need to understand how $h$ affects $\lambda$. Let $\partial_h$ denote the derivative along $h$.

Recall the perturbation formula for a simple top eigenvalue,\footnote{Differentiating $Wv=\lambda v$ and $v\T v=1$ gives $\partial_h\lambda=v\T(\partial_hW)v$ and $\partial_hv=R\,(\partial_hW)v$. Differentiating $\partial_h\lambda=v\T(\partial_hW)v$ once more gives the second formula.}
\begin{equation}\label{perturb}
\partial_h\lambda=v\T(\partial_hW)v,\qquad \partial_h^2\lambda=v\T(\partial_h^2W)v+2\,\bigl((\partial_hW)v\bigr)\T R\,\bigl((\partial_hW)v\bigr),
\end{equation}
where $R=\sum_{\ell\ge2} v_\ell v_\ell\T/ (\lambda-\lambda_\ell)$. 

 As $W_i$ is proportional to $e^{-\sigma_i}$, and using 
$\partial_h^2\sigma_i=2\|h_{S_i}\|^2/s_i$, we have
\[
 \partial_hW=-\sum_{i\in\cM}\ip{\nabla\sigma_i}{h}\,W_i,\qquad
 \partial_h^2W=\sum_{i\in\cM}\bigl(\ip{\nabla\sigma_i}{h}^2-2\|h_{S_i}\|^2/s_i\bigr)W_i .
\]
Notice that $\partial_h\lambda=0$ only imposes a single linear constraint. Thus the main task is to control $\partial_h^2\lambda$.

\medskip

{\bf Notation.} It is useful to introduce some notation to simplify the expression. For a medium row $i$, write $W_i=c_ic_i\T$ with $c_i=(10e^{-\sigma_i})^{1/2}s_i^{-1}\one_{S_i}$. Let $C$ be the matrix with columns $c_i$, $i\in\cM$, so that $CC\T=\sum_{i\in\cM}W_i\preceq W$. 
For a direction $h$, let $z\in\R^{\cM}$ be the vector with entries $z_i=-\ip{\nabla\sigma_i}{h}\ip{c_i}{v}$, and let $D$ be the diagonal matrix
\[
 D=\sum_{i\in\cM} 2\,v\T W_iv\,\diag(\one_{S_i}/s_i).
\]
Then $(\partial_hW)v=Cz$. As $\ip{c_i}{v}^2=v\T W_iv$, using the definition of $z$ and $D$, we can write $v\T(\partial_h^2W)v=\|z\|^2-h\T Dh$.
So by \eqref{perturb}, 
\begin{equation}\label{second}
 \partial_h^2\lambda \;=\;\Bigl(\,
 \underbrace{\|z\|^2}_{T_1}
 \;-\;\underbrace{h\T Dh}_{T_2}
 \;+\;\underbrace{2\,(Cz)\T R\,(Cz)}_{T_3}\Bigr).
\end{equation}
{\bf The subspace $H$.} Let $F=\operatorname{span}\{v_\ell:\lambda_\ell\ge\lambda/10\}$ be the top of the spectrum of $W$; it contains $v$. Write $\|z\|^2=h\T Uh$, where $U=\sum_{i\in\cM}(v\T W_iv)\,\nabla\sigma_i\nabla\sigma_i\T$. Let $J=\{j:D_{jj}>0\}$. If $D_{jj}=0$, then $v\T W_iv=0$ for every $i$ with $j\in S_i$, so the $j$th row and column of $U$ vanish as well. Let $M=D_J^{-1/2}U_JD_J^{-1/2}\succeq0$, and define
\begin{align*}
 H_1&=\{h:\ \Pi_FCz=0\},\\
 H_2&=\bigl\{h:\ D_J^{1/2}h_J  \text{ is orthogonal to every eigenvector of $M$ with eigenvalue}>9/11\bigr\}.
\end{align*}
Thus $H_1$ blocks the top of the spectrum of $W$, killing the part of $T_3$ with small $\lambda-\lambda_\ell$. We show:
\begin{enumerate}[label=(\alph*),leftmargin=2em]
\item on $H_1$: the term $T_3 \leq (2/9)\|z\|^2$;
\item on $H_2$: the term $\|z\|^2\le (9/11)\,h\T Dh$;
\item $H_1$ has codimension at most $N/100$;
\item $H_2$ has codimension at most $5N/8$.
\end{enumerate}

Then $H=H_1\cap H_2$ gives the desired subspace, as $\dim(H) \geq N-N/100-5N/8> N/3$, and by \eqref{second} every $h\in H$ has
$ \partial_h^2\lambda \leq  (11/9)\|z\|^2-h\T Dh\ \le\ 0.$

\medskip
{\bf Proof of (a).} Split $R=R_F+Q$, where
\[
 R_F=\sum_{\ell\ge2:\ v_\ell\in F}v_\ell v_\ell\T/(\lambda-\lambda_\ell),\qquad
 Q=\sum_{\ell:\ \lambda_\ell<\lambda/10}v_\ell v_\ell\T/(\lambda-\lambda_\ell).
\]
On $H_1$, as $R_F\,Cz=0$, the term $T_3 = 2(Cz)\T Q(Cz)$. As $CC\T\preceq W$ and $Q$ commutes with $W$,
\[
 (Cz)\T Q\,(Cz)\le\|Q^{1/2}WQ^{1/2}\|\,\|z\|^2 =\max_{\ell:\,\lambda_\ell<\lambda/10}\frac{\lambda_\ell}{\lambda-\lambda_\ell}\,\|z\|^2\ \le\ \frac19\,\|z\|^2.\]
{\bf Proof of (b).} For $h\in H_2$, the vector $f=D_J^{1/2}h_J$ lies in the span of the eigenvectors of $M$ with eigenvalue at most $9/11$. So $\|z\|^2=h\T Uh=f\T Mf\le\frac9{11}\|f\|^2=\frac9{11}\,h\T Dh$.

\medskip
{\bf Proof of (c).} Let $r$ denote the number of eigenvalues of $W$ with $\lambda_\ell\ge\lambda/10$. We show that $r \leq N/100$. Split $W=W_{\rm med}+W_{\rm rest}$, where \[W_{\rm med}=CC\T=\sum_{i\in\cM}W_i, \,\,\,\text{ and  }\,\,\, W_{\rm rest}=\one_V\one_V\T/(100n)+\sum_{i\in\cL}\diag(\one_{S_i})/(100k).\]
We claim that $W_{\rm med}$ has small trace, while $W_{\rm rest}$ has small norm: Indeed,  as each $W_i$ is rank one and $s_i\ge20k^{1/2}$, we have $\tr W_i=\ip{\one}{W_i\one}/s_i\le\ip{\one}{W_i\one}/(20k^{1/2})$, and thus 
$\tr W_{\rm med} \leq \lambda N/(20\sqrt k)$.
And $\|W_{\rm rest}\|\le N/(100n)+ 2k/(100k) \le 3/100$, using that each coordinate lies in at most $2k$ rows. 

Applying Weyl's inequality $\lambda_\ell(A+B)\le\lambda_\ell(A)+\|B\|$, with $A=W_{\rm med}$ and $B=W_{\rm rest}$, it follows that  $W_{\rm med}$ must also have $r$ eigenvalues at least $\lambda/10-3/100\ge7\lambda/100$, using that  $\lambda \ge 1$. As $W_{\rm med}\succeq0$, this implies
 \[r \cdot 7\lambda/100\le \lambda N/(20k^{1/2}),\]
and thus $r\le 5N/(7k^{1/2})\le  N/100$
for $k$ sufficiently large.

\medskip
{\bf Proof of (d).} The codimension of $H_2$ is at most the number of eigenvalues of $M\succeq0$ exceeding $9/11$, which is at most $\frac{11}9\tr M=\frac{11}9\sum_{j\in J}U_{jj}/D_{jj}$. The gradient $\nabla\sigma_i$ is supported on $S_i$, and has entries $(\nabla\sigma_i)_j=-a_{ij}/\sqrt{s_i}+2x_j/s_i$ for $j\in S_i$. So $|a_{ij}|,|x_j|\le1$ and $s_i\ge20k^{1/2}$ give $(\nabla\sigma_i)_j^2\le c_k/s_i$, with $c_k=\bigl(1+O(k^{-1/4})\bigr)$ which approaches $1$ for large $k$. As $D_{jj}=\sum_{i:\,j\in S_i}2\,v\T W_iv/s_i$, we have
\[
 U_{jj}=\sum_{i:\,j\in S_i}(v\T W_iv)\,(\nabla\sigma_i)_j^2\ \le\ (c_k/2)\,D_{jj}\qquad\text{for every } j .
\] So the codimension is at most $(11/18) c_kN\le 5N/8$ for $k$ sufficiently large. 
\end{proof}
\section{The General Koml\'os Problem}\label{sec:komlos}
For a general matrix $A$ with column norm at most $1$, essentially the same argument works with small changes, which we list here. We refer to Guo, Fang, and Lu~\cite{GuoFangLu26} for related computations. 

First, 
the vector $p_i=(a_{ij}^2)_{j\in S_i}$ plays the role of $\one_{S_i}$, and $q_i=\sum_{j\in S_i}a_{ij}^2$ plays the role of the size $s_i$, and $E_i=\sum_{j\in S_i}a_{ij}^2(1-x_j^2)$.
A row is large if $q_i>K^2$, where $K=16$. The slack is $\sigma_i=(\beta-d_i)/\sqrt{q_i}-E_i/(2q_i)-c$, 
where $\beta=2K^3$ and $c=12K$. For $W$, medium rows contribute $W_i\propto e^{-\sigma_i/(2K)}p_ip_i\T/q_i^2$, and large rows contribute $\diag(p_i)/K$.

Second, instead of discarding small rows, we do the following: whenever $a_{ij}^2>q_i/K$ for some $j\in S_i$, row $i$ stops tracking $x_j$. That is, $j$ is removed from $S_i$, and in $d_i$ the coordinate $x_j$ is treated as frozen at its current value, even though it keeps moving. This costs only $O(1)$ in total per row.
As any entry in a row is not too large, this acts like the lower bound $s_i\ge20k^{1/2}$ in the Beck-Fiala analysis.


\bibliographystyle{abbrv}
\bibliography{ref}

\end{document}